\documentclass[a4paper,runningheads]{llncs}

\newcommand{\prob}{\operatorname{Prob}}
\newcommand{\poly}{\operatorname{poly}}

\newcommand{\est}[1]{\tilde{#1}}

\usepackage[noend]{algorithmic}
\usepackage[boxed]{algorithm}

\newcommand{\OHS}{\mathcal{O}^*}
\newcommand{\OH}{\mathcal{O}}
\newcommand{\NP}{NP}

\usepackage{amsmath}
\usepackage{amsfonts}
\usepackage{amssymb}
\usepackage{paralist}
\usepackage{graphicx}
\usepackage{xspace}

\newcommand{\subsetsum}{\textsc{Subset Sum}\xspace}
\newcommand{\knapsack}{\textsc{Knapsack}\xspace}
\newcommand{\vc}{\textsc{Vertex Cover}\xspace}
\newcommand{\ds}{\textsc{Dominating Set}\xspace}
\newcommand{\tsp}{\textsc{Traveling Salesman}\xspace}

\title{Reducing a Target Interval to a Few Exact Queries\thanks{This work is supported by the Nederlandse Organisatie voor Wetenschappelijk Onderzoek (NWO), project: 'Space and Time Efficient Structural Improvements of Dynamic Programming Algorithms', and by ERC StG project PAAl no.\ 259515.}}
\author{Jesper Nederlof \inst{1} \and Erik Jan van Leeuwen\inst{2} \and Ruben van der Zwaan \inst{3}
\institute{Utrecht University, The Netherlands, \email{J.Nederlof@uu.nl}.\and Sapienza University of Rome, Italy, \email{E.J.van.Leeuwen@dis.uniroma1.it}. \and Maastricht University, The Netherlands,
\email{r.vanderzwaan@maastrichtuniversity.nl}.} }

\begin{document}
\maketitle
\addtocounter{footnote}{-3}
\begin{abstract}
Many combinatorial problems involving weights can be formulated as a so-called \emph{ranged problem}. That is, their input consists of a universe $U$, a (succinctly-represented) set family $\mathcal{F} \subseteq 2^{U}$, a weight function $\omega:U \rightarrow \{1,\ldots,N\}$, and integers $0 \leq l \leq u \leq \infty$. Then the problem is to decide whether there is an $X \in \mathcal{F}$ such that $l \leq \sum_{e \in X}\omega(e) \leq u$. Well-known examples of such problems include \textsc{Knapsack}, \textsc{Subset Sum}, \textsc{Maximum Matching}, and \textsc{Traveling Salesman}. In this paper, we develop a generic method to transform a ranged problem into an \emph{exact problem} (i.e.~a ranged problem for which $l=u$). We show that our method has several intriguing applications in exact exponential algorithms and parameterized complexity, namely:
\begin{itemize}
\item In exact exponential algorithms, we present new insight into whether \subsetsum and \knapsack have efficient algorithms in both time and space. In particular, we show that the time and space complexity of \subsetsum and \knapsack are equivalent up to a small polynomial factor in the input size. We also give an algorithm that solves sparse instances of \knapsack efficiently in terms of space and time.
\item In parameterized complexity, we present the first kernelization results on weighted variants of several well-known problems. In particular, we show that weighted variants of \vc and \ds, \tsp, and \knapsack all admit polynomial randomized Turing kernels when parameterized by $|U|$.
\end{itemize}
Curiously, our method relies on a technique more commonly found in approximation algorithms.
\end{abstract}

\section{Introduction}\label{sec:intro}

In many computational problems in the field of combinatorial optimization the input partly consists of a set of integers. Since integers are naturally represented in binary, they can be exponential in the number of bits of the input instance. For many problems this is not an issue, particularly for problems admitting a strongly polynomial-time algorithm (recall that the running time of such an algorithm does not depend on the size of the integers).
However, (exponentially) large numbers present a major issue for other problems. For example, in weakly \NP-complete problems the large integers are even the sole source of hardness. Strongly \NP-complete problems are often studied in their weighted variants, and often such weighted variants are even considerably harder than their unweighted counterpart.
In this paper, we present a novel method to reduce the challenges posed by (exponentially) large numbers in the input of \NP-complete problems.

We first give a description of the type of problems that we consider. All of the studied problems can be stated according to the following generic pattern. First, there is a universe $U$ (for example the set of vertices or edges of a graph) and a weight function $\omega:U \rightarrow \{1,\ldots,N\}$. Second, there is a succinctly-represented set family $\mathcal{F} \subseteq 2^U$. We will assume that membership of the set $\mathcal{F}$ can be determined in polynomial time by an oracle given as part of the input. Finally, we are given two non-negative integers $l,u$ such that $0 \leq l \leq u \leq \infty$. Then the problem is to decide whether there exists an $X \in \mathcal{F}$ such that $\omega(X) \in [l,u]$, where $\omega(X)=\sum_{e \in X}\omega(e)$. We call this a \emph{ranged problem}. If a problem additionally specifies that $l=u$, this is an \emph{exact problem}. We will be mainly interested in the case where $N$ is exponential, or even super-exponential, in $|U|$.

The main question that we consider and answer in this paper is whether the computational complexity of a ranged problem is equal to that of its corresponding exact problem. This question is motivated by the recent availability of powerful tools for exact problems, such as hashing (see e.g.~\cite{DBLP:journals/siamcomp/HarnikN10}) and interpolation (see e.g.~\cite{DBLP:conf/stoc/LokshtanovN10,DBLP:journals/siamcomp/Mansour95}), that do not seem directly applicable to ranged problems. Hence we may wonder whether there is a difference between ranged and exact problems from the point of view of computational complexity.

Certain cases of this main question are particularly intriguing. For example, the arguably most fundamental pair of an exact and its corresponding ranged problem is \subsetsum and \knapsack respectively\footnote{In the field of cryptography, ``Knapsack'' is often used to refer to ``Subset Sum''.}. Recall that in \subsetsum, we are given a set $U=\{1,\ldots,n\}$, a weight function $\omega : U \rightarrow \{1,\ldots,N\}$, and an integer $t \leq N$, and we are asked to decide whether there exists an $X \subseteq U$ such that $\omega(X)=t$. In the \knapsack problem we are additionally given a weight function $\nu$ and integer $b$, and we are asked to decide whether there exists a set $X$ with $\omega(X) \geq t$ among all $X \subseteq U$ for which $\nu(X) \leq b$. From the perspective of exact exponential algorithms (see e.g.~\cite{Fominbook,DBLP:journals/dam/Woeginger08} for an introduction), both problems are known to be solvable in $\OHS(2^{n/2})$ time and $\OHS(2^{n/4})$ space \cite{DBLP:journals/siamcomp/SchroeppelS81} (see also \cite[Chapter 9]{Fominbook}), while the best polynomial-space algorithms are still the trivial brute-force $\OHS(2^n)$-time algorithms. It is an interesting question whether either of these problems can be solved in $\OHS(1.99^n)$ time and polynomial space. Also, are the problems related in the sense that an improved algorithm for \subsetsum would imply an improved algorithm for \knapsack?

Another interesting perspective is that of \emph{sparse instances}. It is known that the \subsetsum and \knapsack problems can be solved in pseudo-polynomial time and space using a dynamic programming (DP) algorithm~\cite{Bellman03}. An intensively studied case of DP is where the DP-table is guaranteed to be sparse. In \subsetsum, for example, this means that the number of distinct sums of the subsets of the given integers is small. Using memorization, this type of sparseness can be easily exploited if we are allowed to use exponential space. Very recently, polynomial-space equivalents of memorization were given in~\cite{Kaski11} (see also \cite[Chapter 6]{thesisJesper}). The first step in this approach uses hashing, and the second step uses interpolation. It is unclear whether the approach can be extended to \knapsack. One issue is that a good hash function does not hash the target interval to a single interval. Furthermore, interpolation does not apply directly to the ranged case, and the typical solution of adding a few ``slack weights'' to reduce it to the exact case destroys the sparseness property.

We note that different measures of sparseness for \knapsack have been considered previously. Nemhauser and \"{U}llmann~\cite{Nemhauser1969} considered the case when the number of Pareto-optimal solutions is small. A solution $X$ for \knapsack is Pareto-optimal if there is no $X'$ with $\nu(X') < \nu(X)$ and $\omega(X') \geq \omega(X)$, or with $\nu(X') \leq \nu(X)$ and $\omega(X') > \omega(X)$. Note that the number of Pareto-optimal solutions is always at most the number of distinct sums in the instance. The algorithm of Nemhauser and \"{U}llmann uses $\OH(\sum_{i=k}^n p_i)$ time and $\OH(\max p_i)$ space to enumerate all Pareto-optimal solutions, where $p_{i}$ is the number of Pareto-optimal solutions over the first $i$ items. Note that the space requirement is polynomial in the sparseness, whereas the space requirement of the algorithm of~\cite{Kaski11} is polynomial in the size of the instance. In the framework of smoothed analysis\footnote{Smoothed analysis aims to provide a middle ground between average-case and worst-case analysis. See e.g.~\cite{SpielmanT09}.}, however, the number of Pareto-optimal solutions for \knapsack is polynomial in the instance size~\cite{BeierV04}.

In the field of kernelization (see~\cite{DBLP:conf/iwpec/Bodlaender09} for a survey), the \subsetsum problem is known to admit a so-called \emph{polynomial randomized kernel} when parameterized by the number of integers~\cite{DBLP:journals/siamcomp/HarnikN10}. Can a similar kernel be obtained for the \knapsack problem? Again, since~\cite{DBLP:journals/siamcomp/HarnikN10} heavily relies on hashing, it does not seem to be applicable. Similar questions can be asked for weighted variants of several fundamental problems in the field of kernelization, such as the weighted variant of \vc. Is there a (randomized Turing) kernel for this problem parameterized by $|U|$, i.e.\ can we reduce the weights to be at most $2^{|U|^{O(1)}}$?

\medskip\noindent
{\it Our Results}\quad
In this paper, we show that a ranged problem is equally hard as its corresponding exact problem, modulo a factor $\OH(|U| \cdot \lg (|U|\, N))$ in the running time. This implies a positive answer on all of the above questions.  This result uses a generic and clean method to transform a ranged problem into instances of its corresponding exact problem. The method covers the interval $[l,u]$ with a small number of ``fuzzy intervals'' such that an integer is in $[l,u]$ if and only if it is in one of the fuzzy intervals. It relies on a scaling technique that is more commonly found in approximation algorithms; in fact, the prime example of its use is in the FPTAS for \knapsack~\cite{IbarraK1975}.

The paper is organized as follows. In Section~\ref{sec:prelim}, we introduce the required notation and definitions. In Section~\ref{sec:maintool}, we state and prove our main technical contribution. Sections~\ref{sec:exactalg} and~\ref{sec:kernel} are dedicated to corollaries of the main theorem in the fields of exact exponential algorithms and kernelization. Finally, we give a conclusion, further remarks, and open questions in Section~\ref{sec:conclusion}. 

\section{Preliminaries} \label{sec:prelim}
Throughout, we use the $\OHS(\cdot)$ notation that suppresses any factor polynomial in the input size of the given problem instance. We use Greek symbols such as $\omega$ to denote weight functions, i.e.~for a universe $U$ and an integer $N$, $\omega:U \rightarrow \{1,\ldots,N\}$. In this context, we shorthand $\omega(X)=\sum_{e \in X}\omega(e)$ for any $X \subseteq U$. For two integers $l\leq u$, the set of integers $\{l,l+1,\ldots,u\}$ is denoted by $[l,u]$.

A \emph{kernelization algorithm} (or \emph{kernel}) for a parameterized problem $\Pi$ (that is, a problem together with an input measure $k$) computes in polynomial time, given an instance $(x,k)$ of $\Pi$, a new instance $(x',k')$ of $\Pi$ such that $(x',k') \in \Pi$ if and only if $(x,k) \in \Pi$, and $|x'| \leq f(k)$ for some computable function $f$. The instance $(x',k')$ is called a \emph{kernel} of $\Pi$, and it is called a \emph{polynomial kernel} if $f$ is a polynomial. Not every problem admits a polynomial kernel, or the polynomial hierarchy collapses to the third level~\cite{DBLP:journals/jcss/BodlaenderDFH09}. We refer to~\cite{DBLP:conf/iwpec/Bodlaender09} for a recent overview.

A generalization of the notion of a kernel is a \emph{Turing kernel}. Here the requirement that given a kernel yields an equivalent instance is relaxed. Instead, a polynomial number of instances with the same size restrictions as before may be produced. Moreover, there should be a polynomial-time algorithm that, given which of the produced instances are a {\tt Yes}-instance, decides whether the original instance is a {\tt Yes}-instance. The special case where the algorithm returns the OR of the produced instances is called an OR-kernel or a many-to-one kernel \cite{DBLP:conf/stacs/FernauFLRSV09}. Generalizing these notions further, we use the adjective ``randomized'' to indicate that the polynomial-time algorithm computing the final answer may have a constant one-sided error probability. Interestingly, the results of~\cite{DBLP:journals/jcss/BodlaenderDFH09} even apply to the randomized variant of the original kernel definition, but not to Turing kernels.

Given a graph $G=(U,E)$, a subset $X \subseteq U$ is a \emph{vertex cover} if  $u \in X$ or $v \in X$ for every $(u,v) \in E$, and it is a \emph{dominating set} if $u \in X$ or $(u,v) \in E$ for some $v \in X$ for every $u \in U$. In the weighted \vc and \ds problems, we are given a graph on vertex set $U$ together with a weight function $\omega: U \rightarrow \{1,\ldots,N\}$ and an integer $t$, and are asked to decide whether there is a vertex cover or dominating set $X \subseteq U$, respectively, such that $\omega(X) \leq t$.
A \emph{Hamiltonian cycle} is a subset $X \subseteq E$ such that the graph $(V,X)$ is a cycle. In the \tsp problem we are given a graph on edge set $U$ together with a weight function $\omega: U \rightarrow \{1,\ldots,N\}$ and an integer $t$, and are asked to decide whether there exists a Hamiltonian cycle $X \subseteq U$ such that $\omega(X) \leq t$.

\section{The Main Method} \label{sec:maintool}

\newcommand{\low}{l}
\newcommand{\upp}{u}

In this section, we give the main technical contribution of the paper: we transform a ranged problem into a small number of exact problems. The naive way to obtain such a transformation would be to return a new instance for each $x \in [\low,\upp]$, thus yielding $u-l$ problems. However, as $\upp-\low$ can be exponential in the size of the input, this procedure is clearly not efficient. Instead, Theorem~\ref{thm:shrinkint} develops a new family of $\OH(|U| \lg (|U|\, N))$ weight functions.

We want to stress that the  main conceptual consequence of Theorem~\ref{thm:shrinkint} is that, modulo a small polynomial factor, weighted subset-selection problems that aim to find a subset of given exact weight are equally hard as those that aim to find a subset with weight in a given interval.

\begin{theorem}[Shrinking intervals]\label{thm:shrinkint}
Let $U$ be a set of cardinality $n$, let $\omega: U \rightarrow \{0,\ldots,N\}$ be a weight function, and let $\low<\upp$ be non-negative integers with $u-l>1$. Then there is a polynomial-time algorithm that returns a set of pairs $\Omega=\{(\omega_1,t_1),\ldots,(\omega_K,t_K)\}$ with $\omega_i: U \rightarrow \{0,\ldots, N\}$ and integers $t_1,\ldots,t_K \leq N$ such that
\begin{enumerate}[(C1)]
	\item $K$ is at most $(5n +2) \lg (\upp-\low)$, and
	\item for every set $X \subseteq U$ it holds that $\omega(X) \in [\low,\upp]$ if and only if there exist an index $i$ such that $\omega_i(X) = t_i$.
\end{enumerate}
\end{theorem}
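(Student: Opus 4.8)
The plan is to build $\Omega$ by recursion on the width $u-l$: at each level we ``peel off'' the two ends of $[l,u]$ with a bounded number of genuine exact queries, and then recurse \emph{once} on a single interval of roughly half the width obtained by halving all the weights. The engine behind this is the observation that a \emph{scaled} exact query behaves like a ``fuzzy interval'': writing $\omega'(e):=\lfloor \omega(e)/2\rfloor$ (the FPTAS-style rounding), every $X\subseteq U$ satisfies $\omega(X)=2\,\omega'(X)+r(X)$ with $r(X):=|\{e\in X:\omega(e)\text{ odd}\}|\in\{0,1,\ldots,n\}$. Thus fixing the value $\omega'(X)=t$ confines $\omega(X)$ only to the length-$n$ window $[2t,2t+n]$, and an additive slack of $n$ is exactly what $O(n)$ extra exact queries per level can absorb.

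First dispose of the degenerate cases. Since $\omega(X)\le\omega(U)\le nN$ for all $X$, replace $u$ by $\min(u,\omega(U))$; this changes neither the solution set nor the bound we must prove (it can only shrink $u$). If now $u-l<2n$, output the at most $2n$ exact pairs $(\omega,l),(\omega,l+1),\ldots,(\omega,u)$; correctness is immediate, and since $u-l\ge2$ forces $\lg(u-l)\ge1$ we get $K\le 2n\le(5n+2)\lg(u-l)$. Otherwise $u-l\ge 2n$: output the $2n$ exact pairs $(\omega,l+k)$ and $(\omega,u-n+1+k)$ for $k=0,\ldots,n-1$ (all of whose targets lie in $[l,u]$ because $u-l\ge 2n$), and recurse on the instance with weight function $\omega'$ and interval $[\,\lceil l/2\rceil,\ \lfloor (u-n)/2\rfloor\,]$, appending all pairs it returns.

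Correctness of the recursive step rests on two claims. \emph{No false positives:} if a recursively produced pair $(\omega_i,t_i)$ has $\omega_i(X)=t_i$, then by induction $\omega'(X)\in[\lceil l/2\rceil,\lfloor(u-n)/2\rfloor]$, so $\omega(X)=2\omega'(X)+r(X)\ge 2\lceil l/2\rceil\ge l$ and $\omega(X)\le 2\lfloor(u-n)/2\rfloor+n\le(u-n)+n=u$; together with the peeled pairs this shows every produced pair certifies only weights in $[l,u]$. \emph{Coverage:} let $\omega(X)=s\in[l,u]$. If $s\le l+n-1$ or $s\ge u-n+1$ it is caught by a peeled pair. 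Otherwise $l+n\le s\le u-n$; since $\omega'(X)=(s-r(X))/2$ is an integer with $(s-n)/2\le\omega'(X)\le s/2$, the bounds $s-n\ge l$ and $s\le u-n$ give $\omega'(X)\ge\lceil l/2\rceil$ and $\omega'(X)\le\lfloor(u-n)/2\rfloor$, so $X$ is caught by the recursion. Hence the union of all produced solution sets is exactly $\{X:\omega(X)\in[l,u]\}$, which is (C2). Note also that every emitted weight function is of the form $\lfloor\omega/2^{j}\rfloor$, hence maps into $\{0,\ldots,N\}$, every emitted target lies in $[0,u]$, and the whole procedure runs in polynomial time.

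For (C1): the recursed interval has width $\lfloor(u-n)/2\rfloor-\lceil l/2\rceil\le (u-l-n)/2<(u-l)/2$ (and is nonnegative), so the width more than halves at each level, and a routine induction on $u-l$ yields $K\le 2n\lg(u-l)+2n\le 4n\lg(u-l)\le(5n+2)\lg(u-l)$, using $\lg(u-l)\ge 1$. The step I expect to be most delicate is the joint tuning of \emph{how much} to peel ($n$ from each end) and \emph{which rounded endpoints} to hand to the recursion ($\lceil l/2\rceil$ and $\lfloor(u-n)/2\rfloor$): these must be tight enough that the halved sub-instance can never certify a subset of weight outside $[l,u]$, yet loose enough that no genuine solution can slip through the $n$-wide fuzziness of $\omega'$ between the peeled ends and the recursed interval. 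Parities of $l,u$ and the $\pm1$'s from floors and ceilings are precisely where such an argument tends to break, so the four inequalities above (and nonnegativity of the recursed width) want to be checked carefully; everything else — the degenerate cases, polynomial running time, and the geometric decay of the width — is routine.
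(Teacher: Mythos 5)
Your proposal is correct and takes essentially the same approach as the paper's proof: recursively halve the weights via $\omega'(e)=\lfloor\omega(e)/2\rfloor$, peel off $O(n)$ exact targets at each end of $[l,u]$ to absorb the additive rounding slack of at most $n$ (the key inequality $(\omega(X)-n)/2\le\omega'(X)\le\omega(X)/2$ in both arguments), and recurse on an interval of roughly half the width. The only differences are cosmetic: the paper forces $l$ and $u$ to be even by peeling single endpoints before halving and peels $3n{+}1$ and $2n{+}1$ targets per level, whereas you handle parity with floors/ceilings plus the integrality of $\omega'(X)$ and peel $n$ per end; both give the claimed $(5n+2)\lg(u-l)$ bound.
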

\begin{proof}
The theorem is implemented in Algorithm~\ref{alg:shrink}.
\begin{algorithm}
\caption{Shrinking intervals}
\label{alg:shrink}
\begin{algorithmic}[1] 
\REQUIRE $\mathtt{shrink}(\omega,\low,\upp)$
	\IF{$u-l \leq 5n$}
		\RETURN $\{(\omega,\low),(\omega,\low+1),\ldots,(\omega,\upp)\}$.
	\ELSIF{$l$ is odd}
		\RETURN $\{(\omega,l)\} \cup \mathtt{shrink}(\omega,\low+1,\upp)$.
	\ELSIF{$u$ is odd}
		\RETURN $\{(\omega,u)\} \cup \mathtt{shrink}(\omega,\low,\upp-1)$.
	\ELSE
		\STATE For every $e \in U$, set $\omega'(e)=\left\lfloor \omega(e)/2 \right\rfloor$.
		\STATE $\Omega_l \leftarrow \{(\omega,\low),(\omega,\low+1),\ldots,(\omega,\low+3n)\}$.
		\STATE $\Omega_r \leftarrow \{(\omega,\upp),(\omega,\upp-1),\ldots,(\omega,\upp-2n)\}$.
		\RETURN $\Omega_l \cup \mathtt{shrink}(\omega',(\low+2n)/2, (\upp-2n)/2) \cup  \Omega_r$.
	\ENDIF
\end{algorithmic}
\end{algorithm}
Let $T(u-l)$ denote the maximum number of pairs that $\mathtt{shrink}$ returns. We claim that $T(u-l)$ is at most $(5n+2) \lg (\upp-\low)$. 
The cases when either $\low$ or $\upp$ is odd can only happen twice in a row before either case one or four occurs. Observe that for the first case our claim is clearly true, and that for the last case we obtain the bound $T(u-l) \leq (5n+2) + T((u-l)/2)$, which clearly meets our claim since $u-l \geq 2$. This settles (C1) and the claim concerning the running time.

We prove (C2) by induction. For the thirst three cases, (C2) clearly holds, so let us directly proceed to the last case. For the forward direction, assume that $\omega(X) \in [\low,\upp]$. If $\omega(X) \in [\low,\low+3n] \cup [\upp -2n, \upp]$, then a pair fulfilling (C2) is in $\Omega_l \cup \Omega_r$. So assume that $\omega(X) \in [\low+3n,\upp-2n]$. Then a pair fulfilling (C2) will be added in the recursive step by the induction hypothesis, because $\omega'(X) \in [(\low+2n)/2,(\upp-2n)/2]$ since
\begin{equation}\label{eq:scale}
\frac{\omega(X) - n}{2}\leq \sum_{e \in X} \left\lfloor \frac{\omega(e)}{2} \right\rfloor = \omega'(X) \leq \omega(X)/2.
\end{equation}
For the reverse direction, assume that there is a pair $(\omega_i,t_i) \in \Omega$ such that $\omega_i(X)=t_i$. If the pair is from $\Omega_l \cup \Omega_r$, then $\omega(X) \in [\low,\upp]$. So assume that it is added in the recursive step. Then by the induction hypothesis, $\omega'(X) \in [(\low+2n)/2,(\upp-2n)/2]$, and $\omega(X) \in [l,u]$ by~\eqref{eq:scale}.
\qed
\end{proof}

\section{Exact Exponential Algorithms} \label{sec:exactalg}
In this section, we demonstrate the applicability of Theorem~\ref{thm:shrinkint} to exact exponential algorithms. First, we consider the relation between the computational complexity of \knapsack and \subsetsum. It is trivial that any algorithm for \knapsack can be used for \subsetsum: Given an instance $(U,\omega,b)$ of \subsetsum, we can define $t = b$ and $\nu(e)=\omega(e)$ for every $e \in U$. Then the instance of \subsetsum is a {\tt Yes}-instance if and only if the constructed instance of \knapsack is a {\tt Yes}-instance. This can be decided by the assumed algorithm for \knapsack. We prove the converse relation below by applying Theorem~\ref{thm:shrinkint}.

\newcommand{\conc}{\mathtt{conc}} 
 
\begin{theorem}\label{thm:sssknp}
If there exists an algorithm that decides the \subsetsum problem in $\OHS(t(n))$ time and $\OHS(s(n))$ space, then there exists an algorithm that decides the \knapsack problem in $\OHS(t(n))$ time and $\OHS(s(n))$ space.
\end{theorem}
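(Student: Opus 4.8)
The plan is to reduce a \knapsack instance to polynomially many \subsetsum instances by applying Theorem~\ref{thm:shrinkint} twice, interleaved with a \emph{concatenation} of weight functions that packs two linear constraints into the value of a single weight function. Write the \knapsack instance as $(U,\omega,\nu,b,t)$ with $|U|=n$; its condition ``$\exists X:\nu(X)\le b \wedge \omega(X)\ge t$'' is equivalent to the two interval constraints $\nu(X)\in[0,b]$ and $\omega(X)\in[t,W]$, where $W:=\omega(U)$. The cases $t\le 0$, $W<t$, and $b<0$ are trivial; and if $b\le 1$ or $W-t\le 1$, one of the two constraints is met by only polynomially many sets or admits only two feasible values, reducing the instance to a handful of easier ones. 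So I may assume $b\ge 2$ and $W-t\ge 2$.

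First I would invoke Theorem~\ref{thm:shrinkint} on $\nu$ and $[0,b]$ to obtain weight functions $\nu_1,\dots,\nu_{K_1}\colon U\to\{0,\dots,N\}$ and targets $s_1,\dots,s_{K_1}\le N$, with $K_1=O(n\lg b)$, such that $\nu(X)\in[0,b]$ iff $\nu_i(X)=s_i$ for some $i$. The point is that each $\nu_i$ is now an \emph{exact} constraint, which concatenates cleanly with an interval constraint: for each $i$, put $M:=W+1$ and define $\mu_i(e):=M\cdot\nu_i(e)+\omega(e)$. Since $0\le\omega(X)\le W<M$ for every $X$, the value $\mu_i(X)=M\,\nu_i(X)+\omega(X)$ recovers both $\nu_i(X)=\lfloor\mu_i(X)/M\rfloor$ and $\omega(X)=\mu_i(X)\bmod M$; hence $\bigl(\nu_i(X)=s_i \text{ and } \omega(X)\in[t,W]\bigr)$ holds iff $\mu_i(X)\in[Ms_i+t,\ Ms_i+W]$ --- a \emph{single} interval. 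Moreover $\mu_i\colon U\to\{0,\dots,N'\}$ with $N'=O(nN^2)$, so $\lg N'$ is still polynomial in the input size. Then I would apply Theorem~\ref{thm:shrinkint} a second time, to each pair $\bigl(\mu_i,\,[Ms_i+t,\,Ms_i+W]\bigr)$, yielding $O(n\lg W)$ further weight functions $\mu_{i,j}\colon U\to\{0,\dots,N'\}$ and targets $s_{i,j}\le N'$ with $\mu_i(X)\in[Ms_i+t,Ms_i+W]$ iff $\mu_{i,j}(X)=s_{i,j}$ for some $j$.

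Chaining the two equivalences, the \knapsack instance is a {\tt Yes}-instance iff at least one of the $K=O(n^2\lg b\,\lg(nN))$ \subsetsum instances $(U,\mu_{i,j},s_{i,j})$ is a {\tt Yes}-instance. Each such instance still has exactly $n$ elements --- the universe is never enlarged --- with weights and target bounded by $N'$, and all of them can be generated one by one in polynomial time and space. So the \knapsack algorithm generates them and runs the assumed \subsetsum algorithm on each, answering {\tt Yes} iff some call does. Since $K$ and $\lg N'/\lg N$ are polynomial in the input size, the total running time is $K\cdot\OHS(t(n))=\OHS(t(n))$ and, reusing workspace across calls, the space is $\OHS(s(n))$.

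The main obstacle, and the reason one application of Theorem~\ref{thm:shrinkint} does not suffice, is that \knapsack has \emph{two} linear constraints: if one simply concatenates $\nu$ and $\omega$, the joint feasible region $\{\nu(X)\le b,\ \omega(X)\ge t\}$ maps to a \emph{union} of up to $b+1$ intervals, not a single one, so Theorem~\ref{thm:shrinkint} does not apply. Making the budget constraint exact first collapses that union to one interval per index. A second, tempting-but-wrong fix --- turning ``$\omega(X)\ge t$'' into an exact constraint by appending $O(\lg N)$ slack elements with power-of-two weights --- must be avoided, since it enlarges $n$ by $\Theta(\lg N)$ and thereby multiplies a running time such as $2^{n/2}$ by $N^{\Theta(1)}$, a factor $\OHS(\cdot)$ does not absorb; keeping $n$ fixed is exactly what the concatenation achieves. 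Finally, a minor point: the weight functions returned by Theorem~\ref{thm:shrinkint} may take value $0$, which is harmless, as adding $1$ to every weight and $n$ to the target restores positive weights without changing $n$.
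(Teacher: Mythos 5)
Your reduction is correct and rests on the same two ingredients as the paper's proof --- Theorem~\ref{thm:shrinkint} plus concatenation of weight functions --- but assembles them in a different order. The paper applies Theorem~\ref{thm:shrinkint} \emph{independently} to $\nu$ on $[0,b]$ and to $\omega$ on $[t,nN]$, takes the Cartesian product of the two resulting families to get $\OH(n^2\lg^2(nN))$ quadruples of \emph{exact} constraints, and only then concatenates the two exact constraints into one \subsetsum target. You instead shrink only the budget constraint, concatenate each resulting exact constraint with the raw $\omega$ so that the joint condition becomes a single interval for $\mu_i$, and shrink a second time. Both routes yield $\OH(n^2\lg^2(nN))$ \subsetsum calls on a universe of unchanged size $n$ with polynomially longer weights, so they are essentially equivalent in cost; your ordering has the mild advantage of making explicit \emph{why} a single application of the shrinking theorem cannot handle both constraints at once, a point the paper leaves implicit. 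One concrete error in a peripheral remark: your proposed fix for zero weights --- ``adding $1$ to every weight and $n$ to the target'' --- does not work, since incrementing every weight shifts $\omega(X)$ by $|X|$, which varies with $X$, not by $n$. The issue itself is harmless (the paper silently ignores it too): a zero-weight element never affects any subset sum, so it can simply be deleted from the universe without changing the answer, which only decreases $n$.
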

\begin{proof}
Consider a \knapsack instance, consisting of a universe $U=\{1,\ldots,n\}$, weight functions $\omega, \nu: U \rightarrow \{1,\ldots, N\}$, and integers $b,t$. Now we apply Theorem~\ref{thm:shrinkint} on both weight functions to obtain two sets $\Omega_{\omega}, \Omega_{\nu}$. By considering the elements of $\Omega := \Omega_{\omega} \times \Omega_{\nu}$ as quadruples, we obtain a set $\Omega$ of at most $\OH(n^2 \lg^2 (nN))$ quadruples $( \omega_i ,\nu_i,b_i,t_i)$ such that for every $X \subseteq U$ it holds that $\nu(X) \in [0,b]$ and $\omega(X) \in \left[t,nN\right]$ if and only if there exists an $i$ such that $\nu_i(X)=b_i$ and $\omega_i(X)=t_i$.
It remains to show that, for every quadruple $(\omega_i,\nu_i,b_i,t_i)$, we can determine whether there exists an $X \subseteq U$ such that $\nu_i(X)=b_i$ and $\omega(X)=t_i$. To do this, we create an instance of \subsetsum by concatenating the integer values. Specifically, given a quadruple $(\omega_i,\nu_i,b_i,t_i)$, define
\[
	\alpha_i(e)= \nu_i(e)N(n+1) +\omega_i(e) \text{ for every } e \in U \qquad \text{and} \qquad  c_i = b_i(n+1)N+t_i.
\]
It is easy to see that, since $\omega_i(X) \leq N(n+1)$, $\alpha_i(X)=c_i$ if and only if $\nu_i(X)=b_i$ and $\omega(X)=t_i$. Then the assumed algorithm for \subsetsum can be used to decide for every quadruple $(\omega_i ,\nu_i,b_i,t_i)$ whether there exists $X \subseteq U$ such that $\alpha_i(X)=c_i$. This in turn enables us to decide the \knapsack instance. The bound on the time and space complexity follows immediately from the fact that $|\Omega|$ is $\OH(n^{2} \lg^2 (nN))$, which is polynomial in the size of the instance. \qed
\end{proof}
As an easy corollary, we observe that we can apply binary search to even deal with the maximization variant of \knapsack.

\begin{corollary}
There exists an algorithm that decides the \subsetsum problem in $\OHS(t(n))$ time and $\OHS(s(n))$ space if and only if there exists an algorithm that solves the {\sc Maximum} \knapsack problem in $\OHS(t(n))$ time and $\OHS(s(n))$ space.
\end{corollary}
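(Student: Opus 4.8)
The plan is to handle the two directions separately; the ``only if'' direction (a \subsetsum algorithm yields a {\sc Maximum} \knapsack algorithm) is the substantive one and goes through Theorem~\ref{thm:sssknp} together with a binary search. First I would apply Theorem~\ref{thm:sssknp} to the assumed $\OHS(t(n))$-time, $\OHS(s(n))$-space \subsetsum algorithm to obtain an algorithm $\mathcal{A}$ that decides the threshold version of \knapsack within the same time and space bounds. Given a {\sc Maximum} \knapsack instance $(U,\omega,\nu,b)$, the optimal value $\mathrm{OPT}=\max\{\omega(X) : X \subseteq U,\ \nu(X)\le b\}$ is well-defined (the empty set is feasible since $b\ge 0$) and lies in $[0,nN]$ because $\omega(e)\le N$ and $|U|=n$. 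I would then binary-search for $\mathrm{OPT}$ in this range, answering each query ``is $\mathrm{OPT}\ge\tau$?'' by running $\mathcal{A}$ on the threshold-\knapsack instance $(U,\omega,\nu,b,\tau)$.

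For the cost analysis, the binary search uses $\OH(\lg(nN))$ iterations, each reusing the same working storage and each costing $\OHS(t(n))$ time and $\OHS(s(n))$ space; since $\lg(nN)$ is polynomial in the input size, the $\OHS(\cdot)$ notation absorbs it and the overall bounds are $\OHS(t(n))$ time and $\OHS(s(n))$ space. If one wants an optimal \emph{set} rather than just the value $\mathrm{OPT}$, it can be recovered element by element by a standard self-reduction using $n$ additional threshold queries, again staying within the same asymptotic bounds.

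The ``if'' direction is immediate: given a \subsetsum instance $(U,\omega,t)$, set $\nu(e)=\omega(e)$ for every $e\in U$ and $b=t$; then $\max\{\omega(X) : \nu(X)\le b\}=t$ precisely when some $X\subseteq U$ satisfies $\omega(X)=t$, so a single call to the assumed {\sc Maximum} \knapsack algorithm decides the \subsetsum instance within $\OHS(t(n))$ time and $\OHS(s(n))$ space.

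The only real obstacle is bookkeeping rather than an idea: one must fix a precise convention for what ``solving'' {\sc Maximum} \knapsack means (computing the optimal value is enough for the equivalence, with witness extraction being an easy add-on) and verify that the logarithmically many oracle calls are genuinely absorbed by $\OHS(\cdot)$ in \emph{both} resources simultaneously. Both points are routine once spelled out.
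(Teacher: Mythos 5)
Your proposal is correct and matches the paper's intended argument: the paper gives no detailed proof of this corollary, only the remark that one applies binary search on top of Theorem~\ref{thm:sssknp}, which is precisely what you do, and your reverse direction coincides with the trivial reduction the paper describes just before Theorem~\ref{thm:sssknp}. The bookkeeping you flag (the $\OH(\lg(nN))$ factor being absorbed by $\OHS(\cdot)$ in both time and space) is handled exactly as you suggest.
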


We can use the ideas in the proof of Theorem~\ref{thm:sssknp} to give another result on \knapsack. To this end, we assume that the given instance of \knapsack or \subsetsum is sparse, that is, the number of distinct sums in the instance is small. We recall a recent result of Kaski et al.~\cite{Kaski11}.

\begin{theorem}[\cite{Kaski11}]\label{thm:kaski}
There is an algorithm that decides an instance $(U,\omega,t)$ of \subsetsum in $\OHS(S)$ expected time and $\OHS(1)$ space, where $S=|\{ \omega(X): X \subseteq U \}|$.
\end{theorem}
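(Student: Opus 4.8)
The plan is to phrase \subsetsum as a coefficient-extraction problem and attack it with the two ingredients advertised in the introduction: hashing followed by interpolation. Associate with the instance $(U,\omega,t)$ the polynomial $P(x)=\prod_{e\in U}(1+x^{\omega(e)})$; then $(U,\omega,t)$ is a {\tt Yes}-instance precisely when $[x^t]P\neq 0$, and since every coefficient of $P$ is a nonnegative integer, no cancellation can occur. Observe that the number of nonzero coefficients of $P$ is exactly $S$, that $\sum_r [x^r]P = P(1)=2^n$, and hence that each individual coefficient is at most $2^n$. A straightforward memoised dynamic program computes $[x^t]P$ in $\OHS(S)$ time but also $\OHS(S)$ space, so the whole point is to replace the table by an object of polynomial size.

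\emph{Hashing.} Draw a prime $q$ uniformly at random from the primes of magnitude $\tilde\Theta(S\,\lg(nN))$ and fold the exponents modulo $q$, i.e.\ consider $\bar P(x)=P(x)\bmod (x^q-1)$, obtained by replacing each $\omega(e)$ with $\omega(e)\bmod q$. For any residue $r$ we have $[x^r]\bar P=\sum_{s\equiv r\ (\mathrm{mod}\ q)}[x^s]P\ge 0$. If $q$ divides no difference $s-t$ with $s$ an achievable subset sum and $s\neq t$, then $[x^{t\bmod q}]\bar P=[x^t]P$; since each such difference is a nonzero integer at most $nN$ it has only $O(\lg(nN))$ prime divisors, and there are at most $S$ of them, so this favourable event has constant probability when $q$ is drawn from $\Omega(S\,\lg(nN))$ primes. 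In particular the procedure below can only err by a false positive, never by a false negative (by positivity), which is what will make a doubling search on $S$ feasible.

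\emph{Interpolation.} It remains to read off $[x^{t\bmod q}]\bar P$ using only polynomial space, i.e.\ without ever writing down $\bar P$. Fix a prime $p'\equiv 1\pmod q$ with $p'>2^n$, so that reduction modulo $p'$ cannot kill the coefficient we are after, and let $\zeta\in\mathbb{Z}_{p'}$ be a primitive $q$-th root of unity; $p'$ has polynomially many bits. Then, in $\mathbb{Z}_{p'}$,
\[
[x^{t\bmod q}]\bar P \;=\; q^{-1}\sum_{j=0}^{q-1}\zeta^{-(t\bmod q)j}\prod_{e\in U}\bigl(1+\zeta^{\omega(e)j}\bigr),
\]
and each of the $q$ summands is computed by $O(n)$ operations in $\mathbb{Z}_{p'}$ with $O(1)$ field elements of storage, the partial sum being kept in one register. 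This costs $O(nq)=\OHS(S)$ time and $\OHS(1)$ space and, by the hashing step, decides the instance with constant one-sided error.

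\emph{Unknown $S$, and the main obstacle.} Since $S$ is not part of the input, I would run the two steps above for guesses $S'=1,2,4,\dots$, each time with a fresh random prime $q=\tilde\Theta(S'\,\lg(nN))$ and $O(\lg)$ independent repetitions for amplification; the running times form a geometric series, so if we can stop once $S'=\Theta(S)$ the total expected time is $\OHS(S)$. By positivity, answering {\tt No} the first time the extracted value is $0$ is always sound, which disposes of {\tt No}-instances. The delicate case is {\tt Yes}-instances, where every iteration reports a nonzero value — and, worse, on a {\tt No}-instance with $S\gg S'$ heavy folding collisions also typically produce a nonzero value, so a nonzero report is not by itself conclusive. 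The resolution is to have each iteration additionally certify whether the current $q$ is \emph{collision-free} — equivalently, estimate the number of distinct folded sums, which stays at its maximal value until roughly $S'=\Theta(S)$ and then freezes at $S$ — so that a nonzero report backed by such a certificate is a genuine {\tt Yes}. I expect this certification to be the crux: computing the number of nonzero coefficients of $\bar P$ exactly would invite an inverse DFT and hence $\Theta(q)$ space, so one must instead estimate it by a sampling/birthday argument that stays within polynomial space while adding only an $\OHS(S)$ overhead.
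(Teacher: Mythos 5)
First, a remark on the comparison itself: the paper does not prove this statement --- it is imported verbatim from \cite{Kaski11} --- so your attempt can only be measured against that cited source. Your two main ingredients are the right ones and essentially those of Kaski et al.: fold the exponents of $P(x)=\prod_{e\in U}(1+x^{\omega(e)})$ modulo a random prime $q=\tilde\Theta(S\lg(nN))$, and extract the single coefficient $[x^{t\bmod q}](P\bmod (x^q-1))$ in $\OHS(q)$ time and polynomial space by an inverse DFT over $\mathbb{Z}_{p'}$ with $p'\equiv 1\pmod q$ and $p'>2^n$. The divisor-counting collision analysis and the observation that positivity makes the error one-sided are both correct.

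The genuine gap is the one you flag yourself --- the stopping rule of the doubling search over the unknown $S$ --- and the fix you sketch does not work. The number of distinct folded sums equals $S$ only when $q$ separates \emph{all} $\binom{S}{2}$ pairs of achievable sums, which by the same divisor-counting argument requires drawing $q$ from $\Omega(S^2\lg(nN))$ primes; at the scale $q=\Theta(S\lg(nN))$ the expected number of colliding pairs is $\Theta(S)$, so the count does not ``freeze at $S$'' until $S'=\Theta(S^2)$, degrading the bound to $\OHS(S^2)$ even if the certification were free. Moreover, obtaining that count in polynomial space is itself problematic: computing the $q$ coefficients of $\bar P$ one at a time costs $\OHS(q^2)$, and a sampling estimator sees the folded sums weighted by their (possibly extremely skewed) multiplicities, so values of small multiplicity are invisible to polynomially many samples. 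The missing idea is to certify the \texttt{Yes} side by witness extraction via self-reducibility: when $[x^{t\bmod q}]\bar P\neq 0$, decide membership of each element in turn by computing the analogous coefficient of the subproduct over the remaining elements (one further $\OHS(q)$-time, polynomial-space extraction per element), obtaining some $X$ with $\omega(X)\equiv t\pmod q$, and then check deterministically whether $\omega(X)=t$. If $q$ separates $t$ from every other achievable sum --- which already happens with constant probability at $q=\tilde\Theta(S\lg(nN))$ --- then every set counted in $[x^{t\bmod q}]\bar P$ has sum exactly $t$, so the check succeeds. This turns the procedure into a Las Vegas algorithm: answer \texttt{No} on a zero coefficient (sound by positivity), answer \texttt{Yes} on a verified witness (sound by inspection), and otherwise double $S'$; once $S'\geq S$ each round terminates with constant probability, and the geometric series gives $\OHS(S)$ expected time with no error, exactly as the statement requires.
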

Using Theorem~\ref{thm:shrinkint} and the ideas of Theorem~\ref{thm:sssknp}, we can prove the following.

\begin{theorem}\label{thm:sparse}
There is an algorithm that decides an instance $(U,\omega,\nu,t,b)$ of \knapsack in $\OHS(S)$ expected time and $\OHS(1)$ space, where $S=|\{ (\omega(X),\nu(X)): X \subseteq U \}|$.
\end{theorem}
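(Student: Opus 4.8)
The plan is to simply rerun the reduction from the proofs of Theorems~\ref{thm:shrinkint} and~\ref{thm:sssknp} and feed each resulting \subsetsum instance to the algorithm of Theorem~\ref{thm:kaski}; the only genuinely new point is to check that this reduction blows up the number of distinct sums by no more than a factor polynomial in $n$. Concretely: exactly as in the proof of Theorem~\ref{thm:sssknp}, $(U,\omega,\nu,t,b)$ is a {\tt Yes}-instance iff there is $X\subseteq U$ with $\omega(X)\in[t,nN]$ and $\nu(X)\in[0,b]$ (the finitely many degenerate cases, in which one of these intervals is empty or has length at most $1$, are decided directly). Apply Theorem~\ref{thm:shrinkint} to $\omega$ on $[t,nN]$ and to $\nu$ on $[0,b]$, obtaining $\Omega_\omega,\Omega_\nu$ of size $\OH(n\lg(nN))$ each; form the $\OH(n^2\lg^2(nN))$ quadruples $(\omega_i,\nu_i,t_i,b_i)\in\Omega_\omega\times\Omega_\nu$ and concatenate as there, $\alpha_i(e)=\nu_i(e)N(n+1)+\omega_i(e)$ and $c_i=b_iN(n+1)+t_i$. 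Since $\omega_i(X)\le nN<N(n+1)$, the value $\alpha_i(X)$ and the pair $(\nu_i(X),\omega_i(X))$ determine one another, and the instance is a {\tt Yes}-instance iff $(U,\alpha_i,c_i)$ is a {\tt Yes}-instance of \subsetsum for some $i$. We run the algorithm of Theorem~\ref{thm:kaski} on each $(U,\alpha_i,c_i)$ in turn and return the disjunction; as the runs are sequential, the space stays $\OHS(1)$.

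The heart of the argument is the bound $S_i:=|\{\alpha_i(X):X\subseteq U\}|=\OH(n^2)\cdot S$. Inspecting Algorithm~\ref{alg:shrink}, the weight function is altered only in its last branch, by $\omega'(e)=\lfloor\omega(e)/2\rfloor$, so every $\omega_i\in\Omega_\omega$ has $\omega_i(e)=\lfloor\omega(e)/2^{j}\rfloor$ for some $j\ge 0$ (using $\lfloor\lfloor x/2\rfloor/2\rfloor=\lfloor x/4\rfloor$, etc.), and likewise $\nu_i(e)=\lfloor\nu(e)/2^{j'}\rfloor$ for some $j'\ge 0$. Iterating the estimate~\eqref{eq:scale} gives $\omega(X)/2^{j}-n<\omega_i(X)\le\omega(X)/2^{j}$ for every $X$, and similarly for $\nu_i$. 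Hence, writing $(v,w)=(\omega(X),\nu(X))$, every pair $(\omega_i(X),\nu_i(X))$ attained on the fibre of sets with $(\omega(X),\nu(X))=(v,w)$ lies in the box $(v/2^{j}-n,\,v/2^{j}]\times(w/2^{j'}-n,\,w/2^{j'}]$, which has $\OH(n^2)$ integer points; since there are only $S$ such fibres, $|\{(\omega_i(X),\nu_i(X)):X\}|=\OH(n^2)\cdot S$, and by the bijection above $S_i=\OH(n^2)\cdot S$. Each run of the algorithm of Theorem~\ref{thm:kaski} then costs $\OHS(S_i)=\OHS(S)$ expected time and $\OHS(1)$ space, and as there are only $\OH(n^2\lg^2(nN))$ runs, linearity of expectation gives total expected time $\OHS(S)$ and space $\OHS(1)$.

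The one obstacle worth flagging is the last step: $\omega_i(X)$ is \emph{not} a function of $\omega(X)$, because the floor operations do not commute with the summation, so a priori the reduction could turn $S$ distinct sums into far more. The point is that $\omega_i(X)$ is nevertheless confined, by $\omega(X)$ alone, to a window of $n$ consecutive integers — exactly the slack already quantified in~\eqref{eq:scale} — and the same for $\nu_i(X)$ given $\nu(X)$. Intersecting these two one-dimensional windows with the $S$ original pairs is what yields the $\OH(n^2)\cdot S$ bound, as opposed to the useless $\OH(n^2)\cdot S^2$ one would get by merely multiplying the per-coordinate counts.
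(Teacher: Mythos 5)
Your proposal is correct and follows essentially the same route as the paper: the same reduction via Theorems~\ref{thm:shrinkint} and~\ref{thm:sssknp}, followed by the observation that each $\omega_i$ is an iterated halving $\lfloor\omega(e)/2^j\rfloor$, so that $(\omega_i(X),\nu_i(X))$ is confined to an $\OH(n)\times\OH(n)$ box determined by $(\omega(X),\nu(X))$, giving $S_i=\OH(n^2)\cdot S$. The paper packages exactly this box argument as Lemma~\ref{lem:sums} (stated with a single divisor $p$), whereas you prove it inline and, slightly more carefully, allow different exponents $j$ and $j'$ for the two weight functions, which is indeed what the construction requires since $\Omega_\omega$ and $\Omega_\nu$ are produced independently.
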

To prove Theorem~\ref{thm:sparse}, we require the following auxiliary lemma.

\begin{lemma}\label{lem:sums}
Let $U$ be a set of $n$ elements, let $\omega,\nu : U \rightarrow \{1,\ldots,N\}$ be weight functions, let $p\geq1$ be an integer, and let $\est{\omega}(e)=\left\lfloor  \omega(e)/p \right\rfloor$ and $\est{\nu}(e)=\left\lfloor  \nu(e)/p \right\rfloor$ for every $e \in U$. Then
\[
	|\{(\est{\omega}(X),\est{\nu}(X)) : X \subseteq U \}| \leq n^2 \cdot|\{ (\omega(X),\nu(X)) : X \subseteq U \}|.
\]
\end{lemma}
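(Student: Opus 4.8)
The plan is to argue that fixing the value of the pair $(\omega(X),\nu(X))$ pins down the value of $(\est{\omega}(X),\est{\nu}(X))$ up to at most $n^2$ possibilities, and then to sum this over the $|\{(\omega(X),\nu(X)):X\subseteq U\}|$ attainable values of $(\omega(X),\nu(X))$. Note that one cannot simply invoke ``rounding down does not increase the number of distinct values'': this is true for a single weight function but fails for pairs, because $\est{\omega}(X)=\sum_{e\in X}\lfloor\omega(e)/p\rfloor$ is not determined by $\omega(X)=\sum_{e\in X}\omega(e)$ --- two sets of equal $\omega$-weight may have $\sum_{e\in X}(\omega(e)\bmod p)$ differing, hence different $\est{\omega}$-weight --- which is precisely why the factor $n^2$ is needed.

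First I would record a two-sided estimate, in the spirit of~\eqref{eq:scale}: for every $X\subseteq U$,
\[
\frac{\omega(X)}{p}-n\;<\;\est{\omega}(X)\;\le\;\frac{\omega(X)}{p}.
\]
The right inequality is immediate from $\lfloor\omega(e)/p\rfloor\le\omega(e)/p$; for the left one, $\lfloor\omega(e)/p\rfloor\ge\omega(e)/p-(p-1)/p$ gives $\est{\omega}(X)\ge\omega(X)/p-|X|(p-1)/p\ge\omega(X)/p-n(p-1)/p$, which is strictly larger than $\omega(X)/p-n$. The same estimate holds with $\nu,\est{\nu}$ in place of $\omega,\est{\omega}$. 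Consequently, once $\omega(X)$ is fixed, the integer $\est{\omega}(X)$ lies in the half-open interval $(\omega(X)/p-n,\ \omega(X)/p]$, which contains at most $n$ integers; hence $\est{\omega}(X)$ takes at most $n$ distinct values among all sets $X$ sharing that value of $\omega(X)$. Likewise $\est{\nu}(X)$ takes at most $n$ distinct values among all $X$ sharing a given value of $\nu(X)$.

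Finally I would partition $2^U$ according to the value of $(\omega(\cdot),\nu(\cdot))$; the number of classes is exactly $|\{(\omega(X),\nu(X)):X\subseteq U\}|$. On any single class the first coordinate of $(\est{\omega}(X),\est{\nu}(X))$ ranges over at most $n$ values and the second over at most $n$ values, so the pair attains at most $n^2$ values on that class. Since $\{(\est{\omega}(X),\est{\nu}(X)):X\subseteq U\}$ is the union, over all classes, of the values attained on them, its cardinality is at most $n^2$ times the number of classes, which is the claimed inequality. The argument is essentially routine; the only delicate point is keeping the lower estimate on $\est{\omega}(X)$ strict (using $n(p-1)/p<n$), since a non-strict bound would allow the interval to contain $n+1$ integers and would only yield the weaker factor $(n+1)^2$.
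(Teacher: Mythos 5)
Your proof is correct and follows essentially the same route as the paper's: partition the subsets of $U$ by the value of $(\omega(X),\nu(X))$ and observe that on each class $\est{\omega}(X)$ can take at most $n$ values and $\est{\nu}(X)$ at most $n$ values. In fact your write-up is slightly more careful than the paper's, whose stated bound $\omega(X)-n\le p\cdot\est{\omega}(X)\le\omega(X)$ would, read literally, only confine each coordinate to $n+1$ integers; your strict lower bound $\est{\omega}(X)>\omega(X)/p-n$ is exactly what is needed to get the factor $n^2$.
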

\begin{proof}
For any pair of integers $(x,y)$, consider the set $\mathcal{Z} = \{X \subseteq U : \omega(X)=x, \nu(X)=y \}$. For each $X \in \mathcal{Z}$, we have that $\omega(X)-n \leq  p \cdot \est{\omega}(X) \leq \omega(X)$ and $\nu(X)-n \leq  p \cdot \est{\nu}(X) \leq \nu(X)$. Therefore, $|\{ (\est{\omega}(X), \est{\nu}(X)) : X \in \mathcal{Z} \}| \leq n^2$, and the lemma follows.
\qed \end{proof}

\begin{proof}[of Theorem~\ref{thm:sparse}]
We first apply the same construction as in the proof of Theorem~\ref{thm:sssknp} to obtain pairs $(\alpha_i,c_i)$. We then apply the algorithm of Theorem~\ref{thm:kaski} on all of these pairs and return {\tt Yes} if the algorithm finds an index $i$ and a set $X \subseteq U$ such that $\alpha_i(X)=c_i$.

It remains to prove that this introduces at most a polynomial overhead. Since the number of pairs is bounded by a polynomial in the input length, it suffices to show that for every $i$, the quantity $|\{ \alpha_i(X): X \subseteq U\}|$ is at most $\OHS(S)$. Observe that new weight functions are created in two places. First when Theorem~\ref{thm:shrinkint} is invoked: note that in Algorithm~\ref{alg:shrink}, all created weight functions are effectively obtained by halving the weights $x$ times and rounding down, which is equivalent to truncating the bitstring or dividing by $2^x$ and rounding down. Hence, for all created weight functions $\omega_i$, we have that $|\{(\omega_i(X), \nu_i(X)): X \subseteq U\}| \leq n^2 \cdot S$ for every $i$ by Lemma~\ref{lem:sums}. 
The second place is when $\alpha$ is defined by concatenating the integers: then $|\{\alpha_i(X): X \subseteq U\}|=|\{(\omega_i(X), \nu_i(X)): X \subseteq U\}| \leq n^2 \cdot S$. Hence the overhead is at most polynomial.
\qed \end{proof}

\section{Kernelization} \label{sec:kernel}

In this section we show that Theorem~\ref{thm:shrinkint} can be used in combination with a known kernelization technique to reduce the number of bits needed to represent the weights of weighted minimization problems to an amount that is polynomial in the number of bits needed to represent the remainder of the input instance.

\begin{theorem}\label{thm:kernel}
The weighted variants of the \vc and \ds problems, \tsp, and \knapsack all admit polynomial randomized Turing kernels when parameterized by $|U|$.
\end{theorem}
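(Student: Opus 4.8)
The plan is to reduce each of the four weighted problems to the ranged-problem formulation, then apply Theorem~\ref{thm:shrinkint} to the weight function $\omega$ with target interval $[0,t]$ (for the minimization problems \vc, \ds, \tsp) or the appropriate interval for \knapsack, so that we obtain $\OH(|U|\lg(|U|N))$ exact instances; finally we compose each exact instance with a known randomized polynomial kernel for the \emph{exact-weight} version of the corresponding problem, which is essentially \subsetsum-style and hence amenable to the hashing-based kernel of Harnik and Naor~\cite{DBLP:journals/siamcomp/HarnikN10}. The Turing-kernel bookkeeping is then routine: we output polynomially many instances, and the final answer is the OR of the answers on these instances (with the one-sided error coming from the randomized hashing step), which is exactly what a randomized OR/Turing kernel allows.

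Concretely, I would proceed as follows. First, for each of \vc, \ds, and \tsp, observe that ``is there a vertex cover / dominating set / Hamiltonian cycle $X$ with $\omega(X)\le t$'' is a ranged problem with $l=0$, $u=t$, $N$ the maximum weight, and $\mathcal F$ the (polynomial-time recognizable) family of vertex covers, dominating sets, or Hamiltonian cycles; \knapsack is handled as in Theorem~\ref{thm:sssknp} by taking the product of the shrinking construction applied to $\omega$ and to $\nu$. Applying Theorem~\ref{thm:shrinkint} yields $K=\OH(|U|\lg(|U|N))$ pairs $(\omega_i,t_i)$ (or quadruples) with weights bounded by $N$, such that the original instance is a {\tt Yes}-instance iff some exact instance ``$\exists X\in\mathcal F:\ \omega_i(X)=t_i$'' is a {\tt Yes}-instance. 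Crucially, this step does not yet reduce the size of the numbers — the $\omega_i$ still range up to $N$ — so it only transforms ranged to exact; the actual shrinking of the bit-length is deferred to the next step.

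Second, I would invoke, for the exact-weight version of each problem, the randomized polynomial kernel parameterized by $|U|$. For exact \subsetsum this is precisely~\cite{DBLP:journals/siamcomp/HarnikN10}: one picks a random prime $q$ of $\poly(|U|)$ bits and replaces each weight $\omega_i(e)$ by $\omega_i(e)\bmod q$ and $t_i$ by $t_i\bmod q$, which preserves {\tt Yes}-instances exactly and destroys a given {\tt No}-instance only with small probability, giving a kernel with weights of size $2^{\poly(|U|)}$. The same modular-reduction trick applies to the exact-weight versions of \vc, \ds, and \tsp, since in each case we only need to preserve a single linear equation $\omega_i(X)=t_i$ over the chosen structure $X$, and the combinatorial constraint (being a vertex cover, dominating set, or Hamiltonian cycle) is untouched by reducing the weights modulo $q$; composing over the $\poly(|U|,\lg N)$ choices of union bound keeps the total error a constant. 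Chaining the two steps, each original instance produces $\poly(|U|,\lg N)$ new instances, each of total size $\poly(|U|)$, and the OR of their answers decides the original with constant one-sided error — a polynomial randomized Turing kernel.

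The main obstacle, and the only place requiring genuine care, is the union bound over the error probabilities: Theorem~\ref{thm:shrinkint} already multiplies the number of instances, and for \knapsack the product construction squares it, so I must verify that the number of exact instances times the per-instance failure probability of the modular hashing can still be driven below a constant (equivalently, that choosing $q$ with $\Theta(\lg(K\cdot 2^{|U|}))$ bits suffices, which is still $\poly(|U|,\lg N)$ bits). A secondary subtlety is that the kernel size for the Turing kernel must be polynomial in the parameter $|U|$ alone, not in $\lg N$; this is fine because after the modular step each weight has $\poly(|U|)$ bits and the underlying graph already has $\poly(|U|)$ size, but it is worth stating explicitly that $N$ enters only the \emph{number} of produced instances and not their individual sizes.
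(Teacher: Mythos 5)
Your overall strategy coincides with the paper's: apply Theorem~\ref{thm:shrinkint} to replace the target interval by $\OH(|U|\lg(|U|N))$ exact targets, then reduce the weights of each exact instance with the Harnik--Naor randomized modular hashing of Lemma~\ref{lem:HarnikNaor}, choose the per-instance error small enough that a union bound over all produced instances keeps the total one-sided error constant, and return the OR. Two points deserve attention, one of which is a genuine gap relative to the definition of Turing kernel used here.

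First, the gap: after hashing, what you have are instances of the \emph{exact-weight} variant (``is there an $X\in\mathcal{F}$ with $\omega_i'(X)=t_{ij}'$?''), which is a different problem from the ranged problem you started with, so you have not yet produced a Turing kernel for weighted \vc, \ds, \tsp, or \knapsack in the sense of the paper (polynomially many small instances \emph{of the same problem}). The paper closes this by observing that the exact-weight question is in NP and the original problem is NP-complete, so a Karp-reduction maps each small exact-weight instance back to a small instance of the original problem; this step is missing from your write-up and is needed for the statement as claimed. Second, a minor imprecision: reducing modulo a random prime $q$ does \emph{not} preserve {\tt Yes}-instances with the single target $t_i\bmod q$, since $\sum_{e\in X}(\omega_i(e)\bmod q)$ can equal $(t_i\bmod q)+jq$ for any $j\in\{0,\ldots,n-1\}$; this is exactly why Lemma~\ref{lem:HarnikNaor} outputs $n$ targets $t_{i1}',\ldots,t_{in}'$ per weight function, and your instance count should reflect the extra factor $n$ (harmless for the union bound, but it should be there). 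Your closing remarks on the union bound and on keeping the kernel size polynomial in $|U|$ alone are correct; the paper additionally makes explicit the innocuous assumption $\lg N\le 2^{|U|}$ (otherwise brute force) so that the $\lg\lg N$ term in the bit-length of the hashed weights is polynomial in $|U|$.
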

We need the following lemma from Harnik and Naor~\cite{DBLP:journals/siamcomp/HarnikN10}, which uses randomization to reduce the weights.

\begin{lemma}[\cite{DBLP:journals/siamcomp/HarnikN10}]\label{lem:HarnikNaor}
Let $U$ be a set of size $n$. There exists a polynomial-time algorithm that, given $\omega: U \rightarrow \{0,\ldots,N\}$, an integer $t$, and a real $\epsilon > 0$, returns $\omega': U \rightarrow \{0,\ldots, M\}$ and integers $t_1,\ldots,t_n \leq M$ where $M \leq 2^n\cdot\poly(n,\lg N,\epsilon^{-1})$ such that for every set family $\mathcal{F} \subseteq 2^U$:
	\begin{enumerate}[(R1)]
		\item if there is an $X \in \mathcal{F}$ such that $\omega(X) = t$, then there exist $i$ such that $\omega'(X) = t_i$, 
		\item if there is no $X \in \mathcal{F}$ such that $\omega(X) = t$ then
			\begin{equation}\label{eq:probbound}
				\prob[\text{there exist } i \text{ and } X \in \mathcal{F} \text{ such that } \omega'(X) = t_i] \leq \mathcal{O}(\epsilon).
			\end{equation}
	\end{enumerate}
\end{lemma}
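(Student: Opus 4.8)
The plan is to reduce the magnitude of the weights by working modulo a randomly chosen prime. Concretely, I would fix a threshold $P = 2^n \cdot \poly(n, \lg N, \epsilon^{-1})$ (pinned down below), sample a prime $p$ uniformly at random from among the primes not exceeding $P$, and set $\omega'(e) = \omega(e) \bmod p$ for every $e \in U$, so that $\omega' : U \to \{0, \ldots, p-1\}$. Note the algorithm sees only $\omega, t, \epsilon$ and never inspects $\mathcal{F}$, as required. For the targets, observe that any $X \subseteq U$ satisfies $\omega'(X) = \sum_{e \in X}(\omega(e) \bmod p) \equiv \omega(X) \pmod p$ and $0 \le \omega'(X) \le |X|(p-1) \le n(p-1)$. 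Hence every value $\omega'(X)$ congruent to $t$ modulo $p$ lies in $\{\,(t \bmod p) + jp : j = 0, \ldots, n-1\,\}$, and I would simply declare these $n$ numbers to be $t_1, \ldots, t_n$. Taking $M$ to be the common upper bound $np$ on both the weights $\omega'(e)$ and the targets $t_i$ keeps $M$ within $2^n \cdot \poly(n,\lg N, \epsilon^{-1})$.

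With this construction, property (R1) is immediate and in fact deterministic, holding for every choice of $p$: if $\omega(X)=t$ then $\omega'(X) \equiv t \pmod p$ and $\omega'(X) \le n(p-1)$, so $\omega'(X)$ equals one of $t_1, \ldots, t_n$. The core of the argument is (R2). The key reformulation is that, for a fixed $X$, the event ``$\omega'(X) = t_i$ for some $i$'' is exactly the event $p \mid (\omega(X)-t)$, since both are equivalent to $\omega(X) \equiv t \pmod p$. Under the hypothesis of (R2) no $X \in \mathcal{F}$ satisfies $\omega(X) = t$, so for every relevant $X$ the integer $D_X := \omega(X) - t$ is nonzero, with $|D_X| \le nN$ (we may assume $t \le nN$, as otherwise no subset attains weight $t$ and the statement is trivial).

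The probability bound then follows from a union bound over $\mathcal{F}$ together with the elementary fact that a nonzero integer $D$ has at most $\log_2 |D|$ distinct prime factors. For each $X$ the number of candidate primes dividing $D_X$ is at most $\log_2|D_X| = \OH(\lg(nN))$, so
\[
\prob[\,p \mid D_X\,] \le \frac{\OH(\lg(nN))}{\pi(P)},
\]
where $\pi(P) = \Theta(P/\ln P)$ counts the primes at most $P$ (Chebyshev's estimate). Summing over the at most $2^n$ sets $X \in \mathcal{F}$ gives
\[
\prob[\,\exists\, i,\, X \in \mathcal{F} : \omega'(X) = t_i\,] \le 2^n \cdot \frac{\OH(\lg(nN))}{\pi(P)},
\]
and choosing $P$ so that $\pi(P) \ge 2^n \cdot \lg(nN)/\epsilon$ --- which by Chebyshev needs only $P = 2^n \cdot \poly(n, \lg N, \epsilon^{-1})$ --- drives this to $\OH(\epsilon)$, establishing \eqref{eq:probbound}. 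Polynomial running time is not an issue: $\lg P = \poly(n, \lg N, \lg \epsilon^{-1})$, a uniformly random integer below $P$ is prime with probability $\Omega(1/\lg P)$, primality is testable efficiently, and computing the residues $\omega(e) \bmod p$ is routine.

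The step I expect to be the main obstacle --- and the reason the bound on $M$ is exponential rather than polynomial --- is the union bound over the potentially $2^n$ members of $\mathcal{F}$. A single random prime must simultaneously avoid dividing $\omega(X)-t$ for all of the exponentially many bad sets $X$; since each bad set rules out up to $\lg(nN)$ primes, the pool of candidate primes must exceed $2^n \cdot \lg(nN)/\epsilon$, forcing $P$ (and hence $M$) up to $2^n \cdot \poly(n,\lg N,\epsilon^{-1})$. Getting this trade-off right, rather than any individual estimate, is the crux; the number-theoretic inputs (Chebyshev's prime-counting bound and the few-prime-divisors fact) are standard.
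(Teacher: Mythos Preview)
Your proposal is correct and follows essentially the same approach as the paper (which in turn is the argument of Harnik and Naor): pick a random prime $p$ below a threshold of order $2^n\cdot\poly(n,\lg N,\epsilon^{-1})$, reduce all weights modulo $p$, take the $n$ residues of $t$ in $[0,np)$ as targets, and bound the failure probability via a union bound over $\mathcal{F}$ using that each nonzero $\omega(X)-t$ has $\OH(\lg(nN))$ prime divisors while $\pi(P)=\Theta(P/\ln P)$. The paper's sketch invokes exactly these two number-theoretic ingredients, so there is no meaningful difference in route.
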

We give the proof of the lemma in the appendix for completeness. It relies on the fact that in every interval of length $l$ the number of primes is roughly $l / \ln l$ and that a random prime can be constructed in time polylogarithmic in the upper bound of the interval.

\begin{proof}[of Theorem~\ref{thm:kernel}]
In all problems mentioned in the statement of Theorem~\ref{thm:kernel}, there is a set family $\mathcal{F} \subseteq 2^U$ and we are asked whether there exists an $X \in \mathcal{F}$ such that either $\omega(X) \in [0,t]$ or $\omega(X) \in [t,nN]$ (depending on the problem). Note that we can assume that $\lg N \leq 2^{|U|}$; otherwise the input is of size at least $2^{|U|}$ and we can use a trivial brute-force algorithm to solve the instance and reduce it to an equivalent instance of constant size. 

Now we use Theorem~\ref{thm:shrinkint} to obtain a set $\Omega$ of $\ell = \mathcal{O}(n \lg (nN))$ pairs $(\omega_i,t_i)$ and reduce the original problem to detecting whether there exists a pair $(\omega_i,t_i) \in \Omega$ and $X \in \mathcal{F}$ such that $\omega_i(X)=t_i$. To reduce the latter problem further, we apply the algorithm of Lemma~\ref{lem:HarnikNaor}, setting $\epsilon = \epsilon'/\ell$ for some small value of $\epsilon'$. Hence, for every $(\omega_i,t_i)$, we obtain a weight function $\omega'_i$ and $n$ integers $t'_{i1},\ldots,t'_{in}$ such that if there exists an $X \in \mathcal{F}$ with $\omega_i(X)=t_i$, then there exists a $j$ such that $\omega'_i(X)=t'_{ij}$ for some $j$ and otherwise~\eqref{eq:probbound} holds. Hence, this procedure generates $\OH(n^2 \lg (nN))$ pairs such that $(i)$ if there is $X \in \mathcal{F}$ with $\omega(X) \in [0,t]$, a pair $(\omega'_i,t'_{ij})$ with $\omega'_i(X)=t'_{ij}$ is generated $(ii)$ if there is no $X \in \mathcal{F}$ with $\omega(X) \in [0,t]$:
\[
	\prob[\text{there exist } i,j \text{ and } X \in \mathcal{F} \text{ such that } \omega'_i(X) = t'_{ij}] \leq \ell \cdot  \epsilon = \OH(\epsilon').
\]

Now we have reduced the original decision problem to a problem that is clearly in NP: indeed, we can obtain the correct $X \subseteq U$ in non-deterministic polynomial time and verify whether it satisfies $X \in \mathcal{F}$ (that is, is it a vertex cover, dominating set, ...) and $\omega'_i(X) = t'_{ij}$. Then, since the original problem (\vc, \ds, ...) is NP-complete, we can reduce the problem to instances of the original problem with a Karp-reduction. Hence we have reduced one problem instance to many problem instances such that:
\begin{itemize}
	\item if the original instance is a {\tt Yes}-instance, then one of the created instances is also a {\tt Yes}-instance;
	\item if the original instance is a {\tt No}-instance, then with constant probability all created instances are {\tt No}-instances.
\end{itemize}

Thus it remains to show that the number and the description lengths of the created instances are bounded by a polynomial in the original input size. To see that this is the case, first note that after applying Lemma~\ref{lem:HarnikNaor} we have $\OH(n^2 \lg (nN))$ pairs of weight functions bounded by $2^n \cdot \poly(n, \lg N, \epsilon^{-1})$. Since we assumed that $\lg N \leq 2^n$, these weight functions are represented by polynomially many bits. Then, the theorem follows from the fact that a Karp-reduction increases the size of a problem by at most a polynomial factor.
\qed \end{proof}

\section{Conclusion} \label{sec:conclusion}
We presented a generic and simple method to convert ranged problems into exact problems. While this result is already interesting by itself given its generality, we also gave a number of corollaries that followed by combining our method with techniques for exact problems already available from previous work. It is worth emphasizing the generality of our results in Section~\ref{sec:maintool} and Section~\ref{sec:exactalg}. For example, in the context of exact exponential algorithms, \tsp seems to be significantly harder than its unweighted version, {\sc Hamiltonian Cycle}. Recently, the latter was shown to be solvable in $\OHS(1.66^{n})$ time and polynomial space~\cite{DBLP:conf/focs/Bjorklund10}, whereas the best algorithm for \tsp that is insensitive to large weights uses $\OHS(2^n)$ and space. By combining the hashing idea of~\cite{Kaski11} with our method, it is for example possible to obtain a polynomial-space algorithm for \tsp that runs in $\OHS(2^n W)$ time, where $W= |\{ \omega(X) : X \text{ is a Hamiltonian cycle.}\}|$.

We leave the reader with several interesting open questions:
\begin{itemize}
\item Can we get a ``classical''(i.e. non-Turing or many-to-one) polynomial kernel for the considered parameterized version of weighted \vc?
\item Further, significant reduction of the weights in polynomial time seems hard (for example, it would imply an improved pseudo-polynomial algorithm for \subsetsum), but is it possible in pseudo-polynomial time for example for \tsp? \item When is minimizing/maximizing as hard as the general range problem?
\item Can we modify Theorem~\ref{thm:shrinkint} to make it counting-preserving? More precisely, can we obtain a variant of the theorem with a third condition $(C3)$ saying that for every $X \subseteq U$ there is at most one $i$ such that $\omega_{i}(X)=t_i$?
\end{itemize}

\bibliographystyle{abbrv}
{\bibliography{IterativeRounding}}
\end{document}